\newtheorem{theorem}{Theorem}
\newtheorem{definition}[theorem]{Definition}
\newtheorem{assumption}[theorem]{Assumption}
\newtheorem{lemma}[theorem]{Lemma}
\newcounter{parcount}[section]
\renewcommand{\theparcount}{\textit{(\alph{parcount})}}
\newcommand{\myparagraph}[1]{%
  \par\refstepcounter{parcount}%
  \noindent\hspace*{1em}\theparcount \textit{ #1}:
}
\begin{document}
\title{
Delay compensation of multi-input distinct delay \\ nonlinear systems via
neural operators}

\author[1]{Filip Bajraktari}
\author[2]{Luke Bhan\thanks{Correspondence to: lbhan@ucsd.edu}}
\author[2]{Miroslav Krstic}
\author[2]{Yuanyuan Shi}

\affil[1]{University of Belgrade, Serbia}
\affil[2]{University of California San Diego, La Jolla, CA, USA}

\makeatletter
\patchcmd{\@maketitle}{\AB@authlist}{\AB@authlist\vspace{-1.5em}}{}{}
\makeatother

\maketitle
\begin{abstract}
In this work, we present the first stability results for approximate predictors in multi-input nonlinear systems with distinct actuation delays. We show that if the predictor approximation satisfies a uniform (in time) error bound, semi-global practical stability is correspondingly achieved.  For such approximators, the required uniform error bound depends on the desired region of attraction and the number of control inputs in the system. The result is achieved through transforming the delay into a transport PDE and conducting analysis on the coupled ODE-PDE cascade. To highlight the viability of such error bounds, we demonstrate our results on a class of approximators - neural operators - showcasing sufficiency for satisfying such a universal bound both theoretically and in simulation on a mobile robot experiment. 
\vspace*{\baselineskip}

\noindent \textbf{Keywords:} Nonlinear Systems, Delay Systems, Deep Learning, Neural Operators
\end{abstract}

\section{Introduction} \label{sec:introduction}
We study the multi-input, distinct delay, nonlinear system governed by
\begin{equation}
    \dot{X}(t) = f \bigg( X(t), U_1(t-D_1), \dots, U_m(t-D_m) \bigg), \label{eq:system}
\end{equation}
where $X \in \mathbb{R}^n$ is the system state, $U_1, \dots, U_m$ are scalar control inputs, $f: \mathbb{R}^n \times \mathbb{R}^m \rightarrow \mathbb{R}^n$ is a locally Lipschitz vector field satisfying $f(0, \dots,0) = 0$, and $D_1, \dots, D_m$ are (potentially distinct) input delays satisfying (without the loss of generality) $0 < D_1 \leq \dots \leq D_m$. This system commonly appears across a range of applications, including telerobotics \cite{9727201}, connected and automated vehicles \cite{samii2025experimentalimplementationvalidationpredictorbased}, networking systems \cite{doi:10.1137/140980570}, and many more \cite{Bekiaris-Liberis_Krstic:2016}.

To compensate for the delays, the most popular approach for nonlinear systems, termed ``predictor feedback", was developed in \cite{Krstic:2010}. The key idea of predictor feedback designs is to apply known delay-free, stabilizing controllers with a future prediction of the state; thus, when the control law arrives at the system, it aligns with the system's true state. Consequently, for any predictor feedback design, we  require the existence of a set of globally asymptotically stabilizing delay-free control laws:
\begin{assumption}\label{assumption:delay-free-control-laws}
There exists a set of controllers $U_i(t) = \kappa_i \big( X(t)\big)$ with $\kappa_i \in C^1(\mathbb{R}^n;\mathbb{R})$, $i \in [m]$, such that the delay-free system $\dot{X}(t) = f \bigg( X(t), U_1(t), \dots, U_m(t) \bigg)$ is globally asymptotically stable.
\end{assumption}
\noindent In multi-input multi-delay systems, a key challenge naturally arises: with multiple inputs and associated delays, it becomes difficult to coordinate the predictors so that all $m$ control inputs arrive with the correct corresponding state predictions. This was resolved for \emph{exact} predictor feedback designs in \cite{Bekiaris-Liberis_Krstic:2016} with predictors defined as
\allowdisplaybreaks
\begin{subequations}\label{eqs:P_system}
    \begin{align}
        P_1(t) =& X(t) + \int_{t-D_1}^t f \bigg(P_1(\theta), U_1(\theta), U_2(\theta - D_{2,1}),\nonumber \\ 
        &   \dots, U_m(\theta-D_{m,1}) \bigg) d\theta \label{eq:P1(t)} \\
        P_2(t) =& P_1(t) + \int_{t-D_{2,1}}^t f \bigg( P_2(\theta), \kappa_1(P_2(\theta)), U_2(\theta), \nonumber \\& U_3(\theta - D_{3,2}),\dots, U_m(\theta-D_{m,2}) \bigg) d\theta \label{eq:P2(t)} \\
        &\hspace{3cm} \vdots \nonumber \\
        P_m(t) =& P_{m-1}(t) + \int_{t-D_{m,m-1}}^t f \bigg( P_m(\theta), \kappa_1(P_m(\theta)), \nonumber \\ & \kappa_2(P_m(\theta)), \dots, \kappa_{m-1}(P_m(\theta)), U_m(\theta) \bigg) d\theta, \label{eq:Pm(t)}
    \end{align}
\end{subequations}
\allowdisplaybreaks
with initial conditions given by 
\begin{subequations}\label{eqs:P_init_system}
\begin{alignat}{2}
  P_1(\theta) 
   = &X(0) + \int_{-D_1}^\theta f\big( P_1(s), U_1(s), U_s(&&\!\!\!\!s- D_{2,1}), \nonumber  \\
  &\dots, U_m(s - D_{m,1}) \big) ds&&  \\ 
  & && \hspace{-0.5cm} -D_1 \leq \theta \leq 0,\nonumber  \\
P_2(\theta) 
    =& P_1(0) + \int_{-D_{2,1}}^\theta f\big( P_2(s), \nonumber \kappa_1(P_2(s)&&\!\!\!\!), U_2(s), 
 \\&U_3(s - D_{3,2}), \dots, U_m(s - D_{m,2}) &&\!\!\!\!\big) ds \\ 
  & && \hspace{-0.7cm} -D_{2,1} \leq \theta \leq 0, \nonumber \\
  & \hspace{3.5cm} \vdots \nonumber  && \\ 
P_m(\theta) =& P_{m-1}(0) + \int_{-D_{m,m-1}}^\theta f \bigg( P_m(s), \kappa_1&&(P_m(s)), \nonumber \\ &\kappa_2(P_m(s)), \dots, \kappa_{m-1}(P_m(s)), U_m&&(s) \bigg) ds \label{eq:Pm(theta)}\,, \\
            & && \hspace{-1.3cm}  -D_{m,m-1} \leq \theta \leq 0\,, \nonumber
    \end{alignat}
\end{subequations}
where $D_{i, j}$ represents the difference between delays, $D_i - D_j$. 

While the predictors in \eqref{eqs:P_system}, \eqref{eqs:P_init_system} address the issue of distinct delays, the original global stability result \cite[Theorem 1]{Bekiaris-Liberis_Krstic:2016} only applies to exact implementations—which are never feasible without closed-form solutions. Thus, as implicit ODEs, implementing the predictors numerically requires both fixed-point iteration and integral discretization \cite{Predictor_Feedback_For_Delay_Systems}. This poses two main challenges: (i) discretization errors grow with step size and delay, potentially destabilizing the feedback, and (ii) convergence of the fixed-point iteration is not guaranteed, especially for stiff systems. To address this, we make two contributions: (1) a stability result for approximate predictors with uniform error bounds, yielding a semi-global region of attraction; and (2) a neural network-based predictor approximation that significantly accelerates computation compared to traditional methods. For the context, we begin with an overview of recent works in predictor feedback and learning-based approximator designs.

\myparagraph{Delay compensation in nonlinear systems}
Since the Smith predictor \cite{smith1957closer}, predictor feedback has been a foundational tool for delay compensation in nonlinear systems. It has been successfully applied in diverse settings, including delay-adaptive control \cite{6704718}, PDE-based control \cite{9528941}, and event-triggered control \cite{NOZARI2020108754}. In this work, we focus on the predictor feedback designs introduced in \cite{Krstic:2010, Nonlinear_Control_Under_Nonconstant_Delay}, which model input delays as transport PDEs and apply backstepping transformations to guarantee stability. This PDE-based approach has enabled advances in controlling systems with delays across several domains, including connected and automated vehicles (CAVs) \cite{samii2025experimentalimplementationvalidationpredictorbased}, biological systems \cite{6704718}, and additive manufacturing \cite{7921611}. For a comprehensive overview of predictor feedback methods, see \cite{Deng10092022}.

\myparagraph{Learning-based approximations in control}
For decades, neural networks hve been approximating various computationally intractable portions of control. Perhaps most famously, these types of approaches have been long studied for the Hamilton-Jacobi-Bellman equations \cite{4359214}, as well as ODE approximations \cite{SONTAG1998151}, differential games \cite{pmlr-v283-sharpless25a}, and even in the context of operator approximations in PDEs \cite{Luke_gain:2024, KRSTIC2024111649}. They were recently introduced for predictors in \cite{Luke_Peijia:2025} and have since been extended for unknown delays in \cite{CDC}. We emphasize that there are many many more results not present here, but due to space constraints, must be omitted.

\myparagraph{Notation}
We use $\mathbb{R}_+$ to denote the set of positive reals. For an $n$-vector, the norm $|\cdot|$ denotes the usual Euclidean norm. For a function $u: [0,1]\times\mathbb{R} \rightarrow \mathbb{R}$, we denote by $\|u(t)\|_{L^\infty}$ the supremum spatial norm ($\|u(t)\|_{L^\infty} = \sup_{x \in [0,1]}|u(x,t)|$). We use the PDE notation $u_x(x,t)=\frac{\partial u}{\partial x}(x,t)$ to denote derivatives. We use $C^n([t-D,t];\mathbb{R}^m)$ to denote the set of functions with continuous $n$ derivatives. For brevity, when applicable we use $i \in [m]$ to represent the iteration $i \in \{1, 2, 3, \cdots , m\}$.

\section{An operator theoretic view of predictors} \label{sec:predictor-operator}
In this section, we begin by establishing a formal mapping of the solution to the implicit predictors in \eqref{eqs:P_system}, \eqref{eqs:P_init_system}. This perspective, first introduced in \cite{Luke_Peijia:2025}, enables us to view the solution to the implicit predictor ODE as a mapping of the current state and $m$ different control histories ($U_i$, $i \in [m]$) into an estimate of the system at $X(t+D_i)$. However, this definition was originally introduced for a single-input scalar system and thus requires extension to the multi-input multi-delay case. Henceforth, we begin by introducing the definition of $m$ different predictor operators
\begin{definition}\label{def:delay_dependent_predictor_operators}
(\textbf{Predictor operators for multi-input multi-delay systems}) Let $Q \in \mathbb{R}^n$, and for each $i \in [m]$, let $U_i \in C^1([0,1];\mathbb{R})$. Let $\varphi \in \mathbb{R}_+$. Then, define the set of predictor operators $\{\mathcal{P}_1, ..., \mathcal{P}_m\}$ mapping from $\mathbb{R}^n \times C^1([0,1];\mathbb{R})^{m-i+1} \times \mathbb{R}_+$ to $C^1([0,1];\mathbb{R}^n)$ taking
 inputs $(Q,U_i,\dots,U_m, \varphi)$ and returning a function $P_i \in C^1([0,1];\mathbb{R}^n)$ where the output $P_i$ satisfies the implicit differential equation
\begin{align}
    0 = P_i(s) - Q& - \varphi \int_0^s f(P_i(\theta), \kappa_1(P_i(\theta)), \dots, \kappa_{i-1}(P_i(\theta)),\nonumber \\ 
    &U_i(\theta), \dots, U_m(\theta)) d\theta , \quad s \in [0,1], \label{eq:predictor_def_Pi}
\end{align}
where $\kappa_i$, $i \in [m-1]$, are a set of $C^1$ functions from $\mathbb{R}^n \to \mathbb{R}$ that are assumed to be apriori known as static parameters as in Assumption \ref{assumption:delay-free-control-laws}.
\end{definition}
Notice that, by definition, the predictor operator yields the solution to \eqref{eqs:P_system} in the following sense
\begin{subequations}
\begin{align}
    P_1(t) =& \mathcal{P}_1(X(t), T_{D_1,0}(t)U_1, ..., \nonumber \\ & \quad  T_{D_m,D_{m-1}}(t)U_m, D_1)(1), \\
    P_2(t) =& \mathcal{P}_2(P_1(t), T_{D_{2,1},0}(t)U_2, ..., \nonumber \\ & \quad T_{D_{m,1},D_{m,2}}(t)U_m, D_{2,1})(1), \\
    & \hspace{2.5cm} \vdots \nonumber \\
    P_m(t) =& \mathcal{P}_m(P_{m-1}(t), T_{D_{m,m-1},0}(t)U_m, D_{m,m-1})(1),
\end{align}
\end{subequations}
where $T_{\varphi_i,\varphi_j}$, $i, j \in [m]$, $i \geq j$, are the operators that yield the control history of $U$ defined as
\begin{align}
    (T_{\varphi_i,\varphi_j}(t)U) := U(t-\varphi_i+(\varphi_i-\varphi_j)x), \quad x \in [0,1).
\end{align}

The choice to use $m$ separate predictor operators instead of one large operator offers is both deliberate and advantageous. It allows training $m$ separate neural networks on specific datasets rather than a single monolithic model, simplifying training and enabling parallel inference across compute nodes for significant speedups. The downside, however, is that this modular design, causes output discontinuities at operator boundaries as 
$P_{j}(\cdot)(1) \neq P_{j+1}(\cdot)(0)$, 
$j \in [m-1]$. 
Consequently, this is only a theoretical inhibition as practicality, one can enforce continuity in implementation, setting $P_{j+1}(\cdot)(0) = P_{j}(\cdot)(1)$ or following the nested operator design \cite{D2EE04204E}. Moreover, by design, only the minimal information for each predictor—controlled by the deliberate shift $T_{\varphi_1, \varphi_2}(t)$—is needed, reducing computational load.  Lastly, scaling predictors on $s \in [0, 1]$ is intentional, allowing reuse without retraining when the input context $\varphi$ changes and thus, enables scaling across different delays.

\subsection{Continuity of the predictor operators} \label{subsec:continuity}
 In anticipation of approximating the predictor operators in Section \ref{subsec:neural-op-approximators}, we showcase that the operator is Lipschitz continuous.
 To do so, it is standard in the predictor feedback literature \cite{Predictor_Feedback_For_Delay_Systems, Luke_Peijia:2025} to assume Lipschitz continuity of the dynamics over an arbitrary large, but compact region of the global state space.

\begin{assumption}\label{assumption:lipschitz}
Let $f(X,U_1,\dots,U_m)$ be as in \eqref{eq:system} and $\mathcal{X} \subset \mathbb{R}^n$, $\mathcal{U} \subset \mathbb{R}$ be two compact sets. Then, assume there exists a constant $C_f > 0$ such that $f$ satisfies the Lipschitz condition
\begin{align}
    \big| f(x_1,u_1,\dots,u_m&)-f(x_2,u_{m+1},\dots,u_{2m}) \big|\nonumber  &&\\ \hspace{1cm} \leq& C_f \bigg( |x_1-x_2| + \sum_{i=1}^m |u_i - u_{m+i}| \bigg),
\end{align}
for all $x_1,x_2 \in \mathcal{X}$ and $u_1,\dots,u_{2m} \in \mathcal{U}$.
\end{assumption}

Now, we prove the Lipschitz continuity of the predictor operator in the following lemma.

\begin{lemma}\label{lemma:predictor-continuity}
Let $\mathcal{X}$ and $\mathcal{U}$ be compact sets and $C_f$ be the Lipschitz constant as in Assumption~\ref{assumption:lipschitz}. Let $X_1, X_2 \in \mathcal{X}$, control functions $U_j, U_{m+j} \in C^1([0,1];\mathcal{U})$ with $j \in [m]$, and positive real numbers $\varphi_1, \varphi_2 \in \Phi \subset \mathbb{R}_+$ where $\Phi$ is compact. Further, let $\overline{\mathcal{X}}, \overline{\mathcal{U}}$, and $\overline{\Phi}$ be upper bounds for each of their underlying sets. Then, each predictor operator $\mathcal{P}_i$, $i \in [m]$, from Definition~\ref{def:delay_dependent_predictor_operators} satisfies the following Lipschitz condition
\begin{align}
    \bigg\| \mathcal{P}_i&(X_1,U_i,\dots,U_m,\varphi_1) - \mathcal{P}_i(X_2,U_{m+i},\dots,U_{2m},\varphi_2) \bigg\|_{L^\infty} \nonumber  \\
    \leq& C_{\mathcal{P}} \Biggl( |X_1 - X_2| + \sum_{j=i}^m \bigg\| U_j - U_{m+j} \bigg\|_{L^\infty} 
    + |\varphi_1 - \varphi_2| \Biggr), \label{eq:predictor_lipschitz_inequality}
\end{align}
with Lipschitz constant
\begin{align}
    C_{\mathcal{P}} :=& \max \bigg( 1, \Xi, \overline{\Phi} C_f \bigg) e^{\overline{\Phi} C_{\kappa}}, \\ 
    \Xi  :=& m C_f \overline{\mathcal{U}} + C_{\kappa} \bigg( \overline{\mathcal{X}} + m \overline{\Phi} C_f \overline{\mathcal{U}} \bigg) e^{\overline{\Phi} C_{\kappa}}, \\
    C_{\kappa}  :=& C_f \bigg( 1 + \sum_{i=1}^m C_{\kappa_i} \bigg), \label{eq:c_kappa}
\end{align}
where $C_{\kappa_i}$ is Lipschitz constant of each control law $\kappa_i$.
\end{lemma}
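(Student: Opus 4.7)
The plan is to convert Definition~\ref{def:delay_dependent_predictor_operators} into an explicit Volterra-type integral equation, bound the two realizations $P_i^{(1)} := \mathcal{P}_i(X_1,U_i,\dots,U_m,\varphi_1)$ and $P_i^{(2)}:=\mathcal{P}_i(X_2,U_{m+i},\dots,U_{2m},\varphi_2)$ a priori via Grönwall, then take the difference of the two integral equations and apply Grönwall once more. This mirrors the single-input argument in \cite{Luke_Peijia:2025}, but we must be careful with how the Lipschitz constants $C_{\kappa_j}$ of the inner feedback laws compose with $C_f$ since they enter the integrand in the $i$-th predictor.

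First, I would rewrite \eqref{eq:predictor_def_Pi} explicitly as $P_i(s)=Q+\varphi\int_0^s f(P_i(\theta),\kappa_1(P_i(\theta)),\dots,\kappa_{i-1}(P_i(\theta)),U_i(\theta),\dots,U_m(\theta))\,d\theta$, and obtain an a priori pointwise bound on $|P_i^{(k)}(s)|$ for $k\in\{1,2\}$. Using $f(0,\dots,0)=0$, Assumption~\ref{assumption:lipschitz}, and Lipschitz continuity of each $\kappa_j$ with $\kappa_j(0)=0$ (which follows from Assumption~\ref{assumption:delay-free-control-laws} together with global asymptotic stability), I get the linear-growth estimate
\begin{equation*}
    |P_i^{(k)}(s)| \le \overline{\mathcal{X}} + \overline{\Phi}\int_0^s\!\Big( C_f(1+{\textstyle\sum_{j=1}^{i-1}}C_{\kappa_j})|P_i^{(k)}(\theta)| + m C_f\overline{\mathcal{U}} \Big)\,d\theta,
\end{equation*}
and Grönwall's inequality produces the uniform bound $\|P_i^{(k)}\|_{L^\infty}\le (\overline{\mathcal{X}}+m\overline{\Phi}C_f\overline{\mathcal{U}})e^{\overline{\Phi}C_\kappa}$, using $C_f(1+\sum_{j=1}^{i-1}C_{\kappa_j})\le C_\kappa$ from \eqref{eq:c_kappa}. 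Note that this bound is exactly the factor multiplying $C_\kappa$ in the definition of $\Xi$, which is not a coincidence: it appears next as a bound on $|f|$ evaluated along $P_i^{(2)}$.

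Second, I would subtract the two integral equations to obtain
\begin{equation*}
    P_i^{(1)}(s)-P_i^{(2)}(s) = (X_1-X_2) + \varphi_1\!\int_0^s [f^{(1)}\!-\!f^{(2)}]d\theta + (\varphi_1-\varphi_2)\!\int_0^s f^{(2)}d\theta,
\end{equation*}
where $f^{(k)}$ denotes $f$ evaluated along $P_i^{(k)}$ and the corresponding controls. Applying Assumption~\ref{assumption:lipschitz} and Lipschitz continuity of the $\kappa_j$ gives $|f^{(1)}-f^{(2)}|\le C_\kappa |P_i^{(1)}-P_i^{(2)}|+C_f\sum_{j=i}^m |U_j-U_{m+j}|$, while the bound from the first step together with the linear growth of $f$ yields $|f^{(2)}|\le \Xi$. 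Substituting these, and bounding $\varphi_1\le\overline{\Phi}$ and $\int_0^s\le 1$, gives
\begin{equation*}
    |P_i^{(1)}(s)-P_i^{(2)}(s)| \le \alpha + \overline{\Phi}C_\kappa\int_0^s |P_i^{(1)}(\theta)-P_i^{(2)}(\theta)|d\theta,
\end{equation*}
with $\alpha := |X_1-X_2|+\overline{\Phi}C_f\sum_{j=i}^m\|U_j-U_{m+j}\|_{L^\infty}+\Xi|\varphi_1-\varphi_2|$. A final application of Grönwall's inequality over $s\in[0,1]$ and the trivial bound $\max(1,\overline{\Phi}C_f,\Xi)\,e^{\overline{\Phi}C_\kappa}=C_\mathcal{P}$ across the three coefficients in $\alpha$ delivers \eqref{eq:predictor_lipschitz_inequality}.

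The main obstacle is bookkeeping: making sure that the Lipschitz chain $f\!\circ\!(\mathrm{id},\kappa_1,\dots,\kappa_{i-1},U_i,\dots,U_m)$ produces exactly the combined constant $C_\kappa$ in \eqref{eq:c_kappa} uniformly in $i$, and that the a priori bound from the first Grönwall step precisely matches the quantity appearing in $\Xi$; if any one of the $\overline{\mathcal{X}},\overline{\mathcal{U}},\overline{\Phi}$ factors is mishandled, the final constant $C_\mathcal{P}$ will not reduce to the stated closed form. Everything else is routine: two applications of Grönwall and a triangle inequality.
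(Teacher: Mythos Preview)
Your proposal is correct and follows essentially the same route as the paper: an a priori Grönwall bound on $|P_i^{(k)}(s)|$ yielding $(\overline{\mathcal{X}}+m\overline{\Phi}C_f\overline{\mathcal{U}})e^{\overline{\Phi}C_\kappa}$, followed by subtracting the integral equations, splitting $\varphi_1 f^{(1)}-\varphi_2 f^{(2)}$, applying Assumption~\ref{assumption:lipschitz} together with the Lipschitz constants $C_{\kappa_j}$, and a second Grönwall step. The only cosmetic difference is that the paper splits the product as $(\varphi_1-\varphi_2)f^{(1)}+\varphi_2(f^{(1)}-f^{(2)})$ whereas you use $\varphi_1(f^{(1)}-f^{(2)})+(\varphi_1-\varphi_2)f^{(2)}$; both lead to the identical constant $C_{\mathcal{P}}$.
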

\begin{proof}
Define the shorthand notation
\begin{align}
    P_i^{(1)}(s) &:= \mathcal{P}_i(X_1,U_i,\dots,U_m, \varphi_1)(s) \\
    P_i^{(2)}(s) &:= \mathcal{P}_i(X_2,U_{m+i},\dots,U_{2m}, \varphi_2)(s),
\end{align}
for all $s \in [0,1]$. Before we proceed to prove the main part of the lemma, we show that each predictor is uniformly bounded in its domain. First, by definition, Assumption's \ref{assumption:delay-free-control-laws}, \ref{assumption:lipschitz} and the compactness of $\mathcal{X}$, we have
\begin{align}
    \bigg| P_i^{(1)}(s) \bigg| \leq& |X_1| + \varphi_1 \int_{0}^s \bigg| f \bigg( P_i^{(1)}(\theta), \kappa_1 \bigg( P_i^{(1)}(\theta) \bigg)\nonumber \\ 
    &\quad , \dots, \kappa_{i-1} \bigg( P_i^{(1)}(\theta) \bigg), U_i(\theta), \dots, U_m(\theta) \bigg) \bigg| d\theta \nonumber \\
        \leq& |X_1| + \varphi_1 \int_{0}^s C_f \Biggl( \bigg| P_i^{(1)}(\theta) \bigg| + \sum_{j=1}^{i-1} C_j \bigg| P_i^{(1)}(\theta) \bigg| \nonumber \\ & + \sum_{j=i}^m \bigg| U_j(\theta) \bigg| \Biggr) d\theta \nonumber \\
        \leq& |X_1| + \varphi_1 C_f \sum_{j=i}^m \| U_j \|_{L^\infty} \nonumber \\ & + \varphi_1 C_{\kappa} \int_{0}^s \bigg| P_i^{(1)}(\theta) \bigg| d\theta.
\end{align}
Applying Gronwall's inequality yields
\begin{align}
    \bigg| P_i^{(1)}(s) \bigg| \leq& \bigg( \overline{\mathcal{X}} + m \overline{\Phi} C_f \overline{\mathcal{U}} \bigg) e^{\overline{\Phi} C_{\kappa}}.
\end{align}
Then, we have
\begin{align}
    \bigg| P_i^{(1)}&(s) - P_i^{(2)}(s) \bigg| \nonumber \\ 
        \leq& |X_1 - X_2|  + \int_{0}^s \bigg| \varphi_1 f \bigg[ P_i^{(1)}(\theta), \kappa_1 \bigg( P_i^{(1)}(\theta) \bigg),\nonumber \\ 
            & \quad \dots, \kappa_{i-1} \bigg( P_i^{(1)}(\theta) \bigg), U_i(\theta), \dots, U_m(\theta) \bigg] \nonumber \\
            & \quad - \varphi_2 f \bigg[ P_i^{(2)}(\theta), \kappa_1 \bigg( P_i^{(2)}(\theta) \bigg), \dots, \kappa_{i-1} \bigg( P_i^{(2)}(\theta) \bigg), \nonumber \\ 
            & \quad U_{m+i}(\theta), \dots, U_{2m}(\theta) \bigg] \bigg| d\theta \nonumber \\
        \leq& |X_1 - X_2|+ |\varphi_1 - \varphi_2| \int_{0}^s \bigg| f \bigg( P_i^{(1)}(\theta), \kappa_1 \bigg( P_i^{(1)}(\theta) \bigg),\nonumber \\ 
            & \quad \dots, \kappa_{i-1} \bigg( P_i^{(1)}(\theta) \bigg), U_i(\theta), \dots, U_m(\theta) \bigg) \bigg| d\theta \nonumber \\
            &\quad + \varphi_2 \int_{0}^s \Bigl| f \bigg( P_i^{(1)}(\theta), \kappa_1 \bigg( P_i^{(1)}(\theta) \bigg), \dots,\nonumber \\ 
            & \quad \kappa_{i-1} \bigg( P_i^{(1)}(\theta) \bigg), U_i(\theta), \dots, U_m(\theta) \bigg) \nonumber \\
            &\qquad - f \bigg( P_i^{(2)}(\theta), \kappa_1 \bigg( P_i^{(2)}(\theta) \bigg), \dots, \kappa_{i-1} \bigg( P_i^{(2)}(\theta) \bigg), \nonumber \\ 
            & \quad U_{m+i}(\theta), \dots, U_{2m}(\theta) \bigg) \Bigr| d\theta \nonumber \\
        \leq& |X_1 - X_2| \nonumber \\
            &\quad + |\varphi_1 - \varphi_2| \bigg[ m C_f \overline{\mathcal{U}} + C_{\kappa} \bigg( \overline{\mathcal{X}} + m \overline{\Phi} C_f \overline{\mathcal{U}} \bigg) e^{\overline{\Phi} C_{\kappa}} \bigg] \nonumber \\
            &\quad + \varphi_2 \int_{0}^s C_f \Biggl( \bigg| P_i^{(1)}(\theta) - P_i^{(2)}(\theta) \bigg| \nonumber \\
            &\quad  + \sum_{j=1}^{i-1} C_j \bigg( P_i^{(1)}(\theta) - P_i^{(2)}(\theta) \bigg) \nonumber \\ 
            & \quad + \sum_{j=i}^m \bigg| U_j(\theta) - U_{m+j}(\theta) \bigg| \Biggr) d\theta \nonumber \\
        \leq& |X_1 - X_2| \nonumber \\
            &\quad + |\varphi_1 - \varphi_2| \bigg[ m C_f \overline{\mathcal{U}} + C_{\kappa} \bigg( \overline{\mathcal{X}} + m \overline{\Phi} C_f \overline{\mathcal{U}} \bigg) e^{\overline{\Phi} C_{\kappa}} \bigg] \nonumber \\
            &\quad + \varphi_2 C_f \sum_{j=1}^m \| U_j - U_{m+j} \|_{L^\infty} \nonumber \\
            &\qquad + \varphi_2 C_{\kappa} \int_0^s \bigg| P_i^{(1)}(\theta) - P_i^{(2)}(\theta) \bigg| d\theta, \label{eq:setup_for_gronwall}
\end{align}
Applying Gronwall's inequality again yields the desired result.
\end{proof}

Notice that Lemma \ref{lemma:predictor-continuity} explicitly depends on the underlying size of the spaces of the input functions which is as expected in operator approximation continuity bounds (cf. \cite[Lemma 1]{Luke_gain:2024}, \cite[Lemma 2]{Max_gain_schedule:2025}). We are now ready to present our class of approximate predictors via neural operators. 



\subsection{Neural operator based approximate predictors} \label{subsec:neural-op-approximators}
Neural operators are neural networks defined by a specific kernel structure such that they have universal operator approximation guarantees (See \cite{Neural_Operator:2023}). There are a variety of different architectures including Deep Operator Networks (DeepONet) \cite{deeponet:2021}, Fourier Neural Operator (FNO) \cite{FNO:2021}, and even single-head attention transformers \cite{Neural_Operator:2023}. Formally, they are defined as 
\begin{definition}[Neural Operators] \label{definition:neural-operator} \cite[Section 1.2]{lanthaler2024nonlocalitynonlinearityimpliesuniversality} Let $\Omega_u \subset \mathbb{R}^{d_{u_1}}$, $\Omega_v \subset \mathbb{R}^{d_{v_1}}$ be bounded domains with Lipschitz boundary and let  $\mathcal{F}_c \subset C^0(\Omega_u; \mathbb{R}^c)$, $\mathcal{F}_v \subset C^0(\Omega_v; \mathbb{R}^v)$ be continuous function spaces.
    Given a channel dimension $d_c > 0$, we call any $\hat{\Psi}$ a neural operator given it satisfies the compositional form $\hat{\Psi} = \mathcal{Q} \circ \mathcal{L}_L \circ \cdots \circ \mathcal{L}_1 \circ \mathcal{R}$ where  $\mathcal{R}$ is a lifting layer, $\mathcal{L}_l, l=1,..., L$ are the hidden layers, and $\mathcal{Q}$ is a projection layer. That is, 
    $\mathcal{R}$ is given by 
    \begin{equation}
    \mathcal{R} : \mathcal{F}_c(\Omega_u; \mathbb{R}^c) \rightarrow \mathcal{F}_s(\Omega_s; \mathbb{R}^{d_c}), \quad c(x) \mapsto R(c(x), x)\,, 
\end{equation} where $\Omega_s \subset \mathbb{R}^{d_{s_1}}$, $\mathcal{F}_s(\Omega_s; \mathbb{R}^{d_c})$ is a Banach space for the hidden layers and $R: \mathbb{R}^c \times \Omega_u \rightarrow \mathbb{R}^{d_c}$ is a learnable neural network acting between finite-dimensional Euclidean spaces. For $l=1, ..., L$, each hidden layer is given by 
\begin{equation} \label{eq:generalNeuralOperator}
    (\mathcal{L}_l v)(x) := s \left( W_l v(x) + b_l + (\mathcal{K}_lv)(x)\right)\,, 
\end{equation}
where weights $W_l \in \mathbb{R}^{d_c \times d_c}$ and biases $b_l \in \mathbb{R}^{d_c}$ are learnable parameters, $s: \mathbb{R} \rightarrow \mathbb{R}$ is a smooth, infinitely differentiable activation function that acts component wise on inputs and $\mathcal{K}_l$ is the nonlocal operator given by 
\begin{equation} \label{eq:generalKernel}
    (\mathcal{K}_lv)(x) = \int_\mathcal{X} K_l(x, y) v(y) dy\,,
\end{equation}
where $K_l(x, y)$ is a kernel function containing learnable parameters. Lastly, the projection layer $\mathcal{Q}$ is given by 
\begin{equation}
    \mathcal{Q} : \mathcal{F}_s(\Omega_s; \mathbb{R}^{d_c}) \rightarrow \mathcal{F}_v(\Omega_v; \mathbb{R}^v), \quad s(x) \mapsto Q(s(x), y)\,, 
\end{equation}
where $Q$ is a finite dimensional neural network from $\mathbb{R}^{d_c} \times \Omega_v \rightarrow \mathbb{R}^v$. 
\end{definition}

Then, when considering the structure in Definition \ref{definition:neural-operator} as an approximator of operators defined in Definition \ref{def:delay_dependent_predictor_operators}, we obtain the following Theorem as a consequence of the continuity proved in Lemma \ref{lemma:predictor-continuity}. 
\begin{theorem}\label{thm:neural-op-approximate-predictor-theorem}
Let assumption \ref{assumption:lipschitz} hold and let $X \in \mathcal{X}$, $U_i \in C^1([0,1];\mathcal{U})$, $i \in [m]$, and $\varphi \in \Phi$, where $\mathcal{X} \subset \mathbb{R}^n$, $\mathcal{U} \subset \mathbb{R}$, and $\Phi \subset \mathbb{R}_+$ are bounded domains. For each $i \in [m]$, fix a compact set $K_i \subset \mathcal{X} \times C^1([0,1];\mathcal{U})^{m-i+1} \times \Phi$. Then, for any $\epsilon > 0$, there exist neural operator approximations $\hat{\mathcal{P}}_i: K_i \to C^1([0,1];\mathbb{R}^n)$ of the predictor operators $\mathcal{P}_i$ from Definition \ref{def:delay_dependent_predictor_operators} such that
\begin{align}
\sup_{(X,U_i,\ldots,U_m,\varphi) \in K_i}&   
\|\mathcal{P}_i(X,U_i,\ldots,U_m,\varphi)(s) \nonumber \\ & - \hat{\mathcal{P}}_i(X,U_i,\ldots,U_m,\varphi)(s)\|_{L^\infty} < \epsilon,
\end{align}
for all $X \in \mathcal{X}$, $U_i \in C^1([0,1);\mathcal{U})$, $i \in [m]$, and $\varphi \in \Phi$.
\end{theorem}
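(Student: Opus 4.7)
The plan is to derive this as a direct corollary of a universal approximation theorem for neural operators, using the Lipschitz continuity already established in Lemma \ref{lemma:predictor-continuity} as the key regularity input. The overall strategy is: (i) recast each $\mathcal{P}_i$ as a continuous operator from a compact subset of a suitable function-space-valued Banach space into $C^0([0,1];\mathbb{R}^n)$; (ii) invoke a standard neural-operator universal approximation result (e.g.\ the DeepONet/FNO-style theorems surveyed in \cite{Neural_Operator:2023, lanthaler2024nonlocalitynonlinearityimpliesuniversality}); (iii) transfer the approximation error back into the $L^\infty$ norm appearing in the statement.

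First I would address the fact that $\mathcal{P}_i$ takes a \emph{mixed} input $(X, U_i,\dots,U_m,\varphi) \in \mathbb{R}^n \times C^1([0,1];\mathbb{R})^{m-i+1} \times \mathbb{R}_+$, whereas the neural operator definition consumes a single function on a bounded domain. I would encode this mixed input as one function on $[0,1]$ by stacking: define $\tilde{c}(x) := \bigl(X,\, U_i(x),\, \ldots,\, U_m(x),\, \varphi\bigr)$, taking values in $\mathbb{R}^{n+(m-i+1)+1}$, where $X$ and $\varphi$ are encoded as constant channels. This gives a continuous embedding of the product input space into $C^0([0,1];\mathbb{R}^{n+m-i+2})$ under which the product-topology ball in the statement maps into a compact subset $\tilde{K}_i$ of this Banach space. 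By Lemma \ref{lemma:predictor-continuity}, the induced operator $\tilde{\mathcal{P}}_i: \tilde{K}_i \to C^0([0,1];\mathbb{R}^n)$ is Lipschitz (hence continuous) in the $L^\infty$ norms on both sides, since the metric controlling the Lipschitz bound \eqref{eq:predictor_lipschitz_inequality} is dominated by the $L^\infty$ norm on the stacked function representation.

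Next, I would apply the neural operator universal approximation theorem (cited in \cite{Neural_Operator:2023, lanthaler2024nonlocalitynonlinearityimpliesuniversality}) to $\tilde{\mathcal{P}}_i$: for any $\epsilon > 0$, there exists a neural operator $\hat{\Psi}_i$ of the compositional form in Definition \ref{definition:neural-operator} such that
\begin{equation}
    \sup_{\tilde{c} \in \tilde{K}_i} \bigl\| \tilde{\mathcal{P}}_i(\tilde{c}) - \hat{\Psi}_i(\tilde{c}) \bigr\|_{L^\infty([0,1];\mathbb{R}^n)} < \epsilon.
\end{equation}
Defining $\hat{\mathcal{P}}_i(X,U_i,\ldots,U_m,\varphi) := \hat{\Psi}_i(\tilde{c})$ via the same stacking encoder then yields the claimed uniform bound on $K_i$.

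The main obstacle, and the only non-bookkeeping step, is verifying that the encoding $(X,U_i,\dots,U_m,\varphi) \mapsto \tilde{c}$ sends the prescribed compact input set $K_i$ to a \emph{compact} subset of $C^0([0,1];\mathbb{R}^{n+m-i+2})$, since the universal approximation theorem for neural operators is stated over compacta. Compactness of the $X$ and $\varphi$ components is immediate from compactness of $\mathcal{X}$ and $\Phi$; for the $U_j$ components, one uses that $K_i$ is assumed compact in $C^1([0,1];\mathcal{U})^{m-i+1}$, which embeds compactly into $C^0([0,1];\mathcal{U})^{m-i+1}$ by the Arzel\`a--Ascoli theorem (the $C^1$ bound provides equicontinuity and uniform boundedness). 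Once compactness is secured, the remainder is a direct invocation of the cited approximation theorem, and no further nonlinear analysis is required.
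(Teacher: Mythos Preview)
Your proposal is correct and matches the paper's approach exactly: the paper does not give an explicit proof but simply states that the theorem is obtained ``as a consequence of the continuity proved in Lemma~\ref{lemma:predictor-continuity}'' together with the cited neural-operator universal approximation results. Your write-up supplies precisely the bookkeeping (stacking the mixed input into a single function, verifying compactness of the encoded input set, and invoking the UAT over compacta) that the paper leaves implicit.
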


\section{Stability under uniform approximate predictors} \label{sec:stability}
We are now ready to begin studying the stability analysis of the resulting multi-input multi-delay system \eqref{eq:system} under uniform approximate predictors. To do so, we introduce the following three assumptions which are needed for the analysis of the \emph{exact} predictor feedback (\cite{Krstic:2010, Nonlinear_Control_Under_Nonconstant_Delay, Bekiaris-Liberis_Krstic:2016}) and henceforth are reasonable for our study under approximate predictors:
\begin{assumption}\label{assumption:strongly-forward-complete}
\sloppy The system $\dot{X} = f(X, \omega_1, \ldots, \omega_m)$ is strongly forward complete with respect to $\omega = (\omega_1, \ldots, \omega_m)^T$.
\end{assumption}
\begin{assumption}\label{assumption:iss}
\sloppy The system $\dot{X} = f(X, \omega_1 + \kappa_1(X), \ldots, \omega_m + \kappa_m(X))$ is input-to-state stable with respect to $\omega = (\omega_1, \ldots, \omega_m)^T$.
\end{assumption}
\begin{assumption}\label{assumption:strongly-forward-complete-2}
\sloppy The systems $\dot{X} = g_j(X, \omega_{j+1}, \ldots, \omega_m)$, for all $j=1,\dots,m$, with $g_j(X, \omega_{j+1}, \ldots, \omega_m)$ = $f(X, \kappa_1(X), \ldots, \kappa_j(X), \omega_{j+1}, \ldots, \omega_m)$, are strongly forward complete with respect to $\omega = (\omega_{j+1}, \ldots, \omega_m)^T$.
\end{assumption}

The justification for each assumption is in \cite{Bekiaris-Liberis_Krstic:2016}. We are now ready to present the main stability result. 

\begin{theorem}\label{thm:main_result}
Let system \eqref{eq:system} satisfy Assumptions \ref{assumption:lipschitz},\ref{assumption:strongly-forward-complete}-\ref{assumption:strongly-forward-complete-2}, and assume that $\kappa_i$, $i \in [m]$, are stabilizing as in the Assumption \ref{assumption:delay-free-control-laws}. Then, for $\overline{\mathcal{B}} := \min \{ \overline{\mathcal{X}}, \overline{\mathcal{U}} \}$, there exist functions $\alpha_1 \in \mathcal{K}$ and $\beta \in \mathcal{KL}$ such that if $\epsilon < \epsilon^*$ where
\begin{align}
    \epsilon^{*}(\overline{\mathcal{B}}) := \alpha_1^{-1}(\overline{\mathcal{B}})/m, \label{eq:epsilon_star}
\end{align}
and the initial state is constrained to
\begin{align}
    \Gamma(0) \leq \Omega, \label{eq:init_state_restriction}
\end{align}
where
\begin{align}
    \Omega(\epsilon, \overline{\mathcal{B}}) := \alpha_{2}^{-1} \bigg( \overline{\mathcal{B}} - \alpha_{1}(m\epsilon)\bigg), \quad \alpha_{2}(r) = \beta(r, 0), \ r \geq 0, \label{eq:Omega}
\end{align}
then, the closed-loop solutions, under the controller with the neural operator-approximated predictor $U_i(t) = \kappa_i \bigg( \hat{P}_i(t) \bigg)$, $i \in [m]$, satisfy the $\overline{\mathcal{B}}$-semiglobal and $\epsilon$-practical stability estimate
\begin{align}
    \Gamma(t) \leq& \beta(\Gamma(0), t) + \alpha_1(m\epsilon), \quad \forall t \geq 0, \label{eq:main_inequality}
\end{align}
where
\begin{align}
    \Gamma(t) =& |X(t)| + \sum_{i=1}^m \|u_i(t)\|_{L^\infty}, \label{eq:Xi}
\end{align}
with the neural operator semiglobally trained with $\overline{\mathcal{B}} > \alpha_2(\Omega)$ and $\epsilon \in (0, \alpha_1^{-1}(\overline{\mathcal{B}})/m)$.
\end{theorem}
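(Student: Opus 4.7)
The plan is to reduce the approximate-predictor closed-loop system to a bounded perturbation of the exact-predictor closed loop (for which \cite{Bekiaris-Liberis_Krstic:2016} already provides a backstepping Lyapunov analysis), and then handle the semiglobal constraint by a standard escape-time bootstrap. First I would reformulate each delayed input as a transport PDE on $x\in[0,1]$, $u_i(x,t):=U_i(t-D_i+D_i x)$, obtaining the ODE-PDE cascade with $u_i(1,t)=U_i(t)$ and with $u_i(0,t)=U_i(t-D_i)$ feeding the ODE in \eqref{eq:system}. Under the neural-operator feedback $U_i(t)=\kappa_i(\hat P_i(t))$, writing $\kappa_i(\hat P_i(t))=\kappa_i(P_i^{\star}(t))+\Delta_i(t)$ with $|\Delta_i(t)|\le C_{\kappa_i}\epsilon$ (using Theorem \ref{thm:neural-op-approximate-predictor-theorem} and Lipschitzness of $\kappa_i$) converts the closed loop into the \emph{exact}-predictor closed loop driven by an additive boundary disturbance of size $O(\epsilon)$ on each of the $m$ PDEs.

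Next, I would apply the Bekiaris-Liberis/Krstic backstepping transformation to each transport PDE, yielding target variables $w_i$ on homogeneous transport equations whose boundary is perturbed by $\Delta_i$. Assumption \ref{assumption:iss} provides the ISS gain from the boundary perturbation to the ODE state, and standard PDE Lyapunov estimates on the $w_i$ provide the corresponding gain from the perturbation to $\|u_i(t)\|_{L^\infty}$. Summing these $m$ contributions and composing with the class-$\mathcal{KL}$ decay of the unperturbed backstepping Lyapunov functional produces the desired bound
\begin{equation*}
\Gamma(t)\le\beta(\Gamma(0),t)+\alpha_1(m\epsilon),
\end{equation*}
where the factor $m$ reflects the cumulative error contribution across the $m$ cascaded predictors and $\alpha_1,\beta$ inherit their form from the composed ISS/backstepping gains.

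The main obstacle is the semiglobal bootstrap: the uniform bound in Theorem \ref{thm:neural-op-approximate-predictor-theorem} is only valid while the argument of $\hat{\mathcal{P}}_i$ lies in the compact training set, i.e.\ as long as $\Gamma(t)\le\overline{\mathcal{B}}$. I would close this by contradiction. Set $t^{\star}:=\inf\{t\ge 0:\Gamma(t)>\overline{\mathcal{B}}\}$; on $[0,t^{\star})$ the approximation bound and Lemma \ref{lemma:predictor-continuity} hold, so the perturbed stability estimate yields $\Gamma(t)\le\beta(\Gamma(0),0)+\alpha_1(m\epsilon)\le\alpha_2(\Omega)+\alpha_1(m\epsilon)=\overline{\mathcal{B}}$ by the choice of $\Omega$ in \eqref{eq:Omega} and $\epsilon<\epsilon^{\star}$ in \eqref{eq:epsilon_star}. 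This contradicts $t^{\star}<\infty$, so the estimate extends for all $t\ge 0$. Assumptions \ref{assumption:strongly-forward-complete} and \ref{assumption:strongly-forward-complete-2} are invoked to preclude finite-time escape on the initial delay window before the ISS bound engages, while the explicit Lipschitz constants from Lemma \ref{lemma:predictor-continuity} furnish the concrete $\alpha_1$, $\alpha_2$, and the threshold $\epsilon^{\star}$.
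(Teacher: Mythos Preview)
Your approach matches the paper's: transport-PDE reformulation, the backstepping transform of \cite{Bekiaris-Liberis_Krstic:2016} to a target system with boundary perturbation $w_i(D_i,t)=\kappa_i(\hat P_i)-\kappa_i(P_i)$, an ISS cascade estimate for $(X,\sum_i\|w_i\|_{L^\infty})$ (the paper's Lemma~\ref{lem:lemma_3_of_6}), and then Lemmas~4--6 of \cite{Bekiaris-Liberis_Krstic:2016} to return to $\Gamma$; your explicit escape-time bootstrap for the semiglobal step is in fact spelled out more carefully than what the paper writes. One imprecision to fix: the bound $|\Delta_i(t)|\le C_{\kappa_i}\epsilon$ ignores that $\hat P_i$ is fed $\hat P_{i-1}$ rather than $P_{i-1}$, so by Lemma~\ref{lemma:predictor-continuity} the error compounds to order $C_{\mathcal P}^{\,i-1}\epsilon$ rather than $\epsilon$---harmless once absorbed into $\alpha_1$, but not captured by the linear factor $m$ alone.
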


Before proving the result, we briefly discuss its implications. The result states that for all $\epsilon < \epsilon^*$, there exists a region of attraction $\Omega$ whose size grows inversely with $\epsilon$, such that all initial states $\Gamma(0) \leq \Omega$ are $\epsilon$-practically stable. Conversely, for $\epsilon > \epsilon^*$, no such region exists, and the system becomes unstable for any neural network predictor with large errors. Thus, the result has two key perspectives: for neural network engineers, achieving a sufficiently small $\epsilon$ is essential to stabilize the system; for control verification engineers, accurately estimating $\epsilon$ is crucial, as it directly determines both the region of attraction and the size of the invariant stable set. Such estimation is beyond the scope of this study but may be performed using, for example, the $L_2$ testing error.

\begin{proof}
    The proof of Theorem \ref{thm:main_result} will be split over a series of technical Lemmas across two subsections. We begin by introducing an equivalent representation of \eqref{eq:system} representing $m$ delays as individual transport PDEs.
\subsection{Transport PDE equivalent delay system}
Consider the setting in which $U_i$ is invoked with the approximated predictions as in Theorem \ref{thm:neural-op-approximate-predictor-theorem}. Thus, the approximate predictors are given by the operator approximations as 
\begin{subequations}
\begin{align}
    \hat{P}_1(t) =& \hat{\mathcal{P}}_1(X(t), T_{D_1,0}(t)U_1, ..., T_{D_m,D_{m-1}}(t)U_m, D_1)(1), \\
    \hat{P}_2(t) =& \hat{\mathcal{P}}_2(\hat{P}_1(t), T_{D_{2,1},0}(t)U_2, \nonumber \\ &..., T_{D_{m,1},D_{m,2}}(t)U_m, D_{2,1})(1), \\
    &\ \hspace{3cm} \vdots \nonumber \\
    \hat{P}_m(t) =& \hat{\mathcal{P}}_m(\hat{P}_{m-1}(t), T_{D_{m,m-1},0}(t)U_m, D_{m,m-1})(1).
\end{align}
\end{subequations}
Then,  system \eqref{eq:system} can be written equivalently in its ODE-PDE form, originally developed in \cite{Krstic:2010}, as
\begin{subequations}\label{eqs:plant_ode_pde}
    \begin{align}
        \dot{X}(t) =& f(X(t), u_1(0,t), \ldots, u_m(0,t)), \label{eq:state} \\
        \frac{\partial}{\partial t} u_i(x,t) =& \frac{\partial}{\partial x} u_i(x,t), \quad (x,t) \in [0,D_i) \times \mathbb{R}_+,\label{eq:delay_pde} \\
        u_i(D_i,t) =&\kappa_i(\hat{P}_i(t)), \quad i \in [m]. \label{eq:pde_boundary}
    \end{align}
\end{subequations}
The rewritten system is an ODE coupled with $m$ different transport PDEs, where the analytical solution of each transport PDE is given as
\begin{align}
    u_i(x,t) = U_i(t+x-D_i), \quad x \in [0, D_i),
\end{align}
$i \in [m]$. Notice that the transport PDEs \eqref{eq:delay_pde} are defined on growing spatial domains as the number of inputs increases. Hence, the time for the input signal to reach from the input boundary at position $D_i$ to the opposite side at $0$ is exactly $D_i$ seconds with the unit transport speed.

The corresponding predictors then become, in PDE notation, 
\begin{subequations}\label{eqs:ps}
    \begin{align}
        p_1(x,t) =& X(t) + \int_0^x f \bigg( p_1(y,t), u_1(y,t), \ldots, u_m(y,t) \bigg) dy, \label{eq:p1(x,t)}  \nonumber \\
            &\quad x \in [0, D_1) \\
        p_2(x,t) =& p_1(D_1,t) + \int_{D_1}^x f \bigg( p_2(y,t), \kappa_1(p_2(y,t)), u_2(y,t),\nonumber \\ & \ldots, u_m(y,t) \bigg) dy, \label{eq:p2(x,t)} \quad x \in [D_1, D_2) \\
        &\hspace{2cm} \vdots \nonumber \\
        p_m(x,t) =& p_{m-1}(D_{m-1},t) + \int_{D_{m-1}}^x f \bigg( p_m(y,t), \nonumber \\ &  \kappa_1(p_m(y,t)), \ldots, \kappa_{m-1}(p_m(y,t)), u_m(y,t) \bigg) dy, \label{eq:pm(x,t)} \nonumber \\
            &\quad x \in [D_{m-1}, D_m) .
    \end{align}
\end{subequations}
Notice that, the predictors $p_1(D_1, t), ..., p_m(D_m, t)$ are equivalent to the original predictors $P_1(t), ..., P_m(t)$ in \eqref{eqs:P_system}. The key advantage of rewriting \eqref{eq:system} in the form of \eqref{eq:state}, \eqref{eq:delay_pde}, \eqref{eq:pde_boundary} is that analyzing the ODE-PDE cascade becomes significantly simpler then the original ODE under delayed inputs.

\subsection{Stability of the ODE-PDE system}
To analyze the system, we begin by introducing the following backstepping transformation as in \cite{Bekiaris-Liberis_Krstic:2016} to transform the ODE-PDE cascade into what we call the ``target system". The key advantage of the target system is that it is feasible to show that the system is ISS-stable \cite{iasson-iss}. This stability result for the target system can then be transformed into a stability result for the original system via an inverse backstepping transformation, completing the main result. 
\begin{lemma}\label{lemma:backstepping-transform-to-target-system}
The backstepping transformations for the transport PDEs $u_i$ defined by
\begin{subequations}
\begin{align}
    w_1(x,t) =& u_1(x,t) - \kappa_1\bigg(p_1(x,t)\bigg), \quad x \in [0, D_1) \label{eq:w1} \\
    w_2(x,t) =& u_2(x,t) - 
        \begin{cases} 
            \kappa_2\bigg(p_1(x,t)\bigg), \quad x \in [0, D_1) \\
            \kappa_2\bigg(p_2(x,t)\bigg), \quad x \in [D_1, D_2)
        \end{cases} \label{eq:w2} \\
    &\ \vdots \nonumber \\
    w_m(x,t) =& u_m(x,t) - 
        \begin{cases} 
            \kappa_m\bigg(p_1(x,t)\bigg), \quad x \in [0, D_1) \\
            \kappa_m\bigg(p_2(x,t)\bigg), \quad x \in [D_1, D_2) \\
            \ \vdots \\
            \kappa_m\bigg(p_m(x,t)\bigg), \quad x \in [D_{m-1}, D_m)
        \end{cases}, \label{eq:wm}
\end{align}
\end{subequations}
transform the plant \eqref{eqs:plant_ode_pde} to the target system $(i \in [m])$
\begin{subequations}\label{eqs:target}
    \begin{align}
        \dot{X}(t) &= f \bigg( X(t), \kappa_1(X(t)) + w_1(0,t),\nonumber \\ & \dots, \kappa_m(X(t)) + w_m(0,t) \bigg),  \label{eq:target_ode} \\
        \partial_t w_i(x,t) &= \partial_x w_i(x,t), \quad (x,t) \in (0,D_i) \times [0,\infty) \label{eq:target_pde} \\
        w_i(D_i,t) &= \kappa_i(\hat{P}_i(t)) - \kappa_i(P_i(t)), \quad t \in [0,\infty). \label{eq:target_pde_boundary}
    \end{align}
\end{subequations}
\end{lemma}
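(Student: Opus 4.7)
My approach is direct verification: I would check that the $w_i$ defined by the transformation satisfies each of the three target equations \eqref{eq:target_ode}--\eqref{eq:target_pde_boundary} by computing derivatives and boundary values. The crucial preparatory ingredient is a \emph{predictor transport identity}: each $p_i(x,t)$ must satisfy $\partial_t p_i = \partial_x p_i$ on its spatial domain $[D_{i-1}, D_i)$ (with $D_0 = 0$). For $p_1$, differentiating \eqref{eq:p1(x,t)} in $x$ gives $\partial_x p_1 = f(p_1, u_1, \ldots, u_m)$; differentiating in $t$ and substituting the plant transport equation $\partial_t u_i = \partial_x u_i$ along with the ansatz $\partial_t p_1 = \partial_x p_1$ turns the integrand into a total $x$-derivative, which integrates to a boundary term whose lower endpoint cancels $\dot X(t)$ via \eqref{eq:state}, leaving exactly $\partial_x p_1$. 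For $i \geq 2$ the same structure carries through inductively, with the cascade initialization $p_i(D_{i-1}, t) = p_{i-1}(D_{i-1}, t)$ playing the role of $p_1(0,t) = X(t)$ and the inductive identity $\partial_t p_{i-1}(D_{i-1}, t) = \partial_x p_{i-1}(D_{i-1}, t)$ replacing the plant ODE.

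With that identity in hand, the three target equations follow by substitution. For \eqref{eq:target_ode}, evaluating at $x = 0 \in [0, D_1)$ gives $w_i(0,t) = u_i(0,t) - \kappa_i(p_1(0,t)) = u_i(0,t) - \kappa_i(X(t))$ for every $i$, so $u_i(0,t) = w_i(0,t) + \kappa_i(X(t))$ and substitution into \eqref{eq:state} yields \eqref{eq:target_ode}. For \eqref{eq:target_pde}, on each subinterval $[D_{j-1}, D_j) \subset [0, D_i)$ we have $w_i(x,t) = u_i(x,t) - \kappa_i(p_j(x,t))$, so $\partial_t w_i - \partial_x w_i = (\partial_t u_i - \partial_x u_i) - \kappa_i'(p_j)(\partial_t p_j - \partial_x p_j) = 0$ by \eqref{eq:delay_pde} and the predictor identity; continuity of $w_i$ across each junction $x = D_j$ follows from the cascade condition $p_{j+1}(D_j, t) = p_j(D_j, t)$ built into \eqref{eq:p2(x,t)}--\eqref{eq:pm(x,t)} together with continuity of $\kappa_i$. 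For \eqref{eq:target_pde_boundary}, evaluating the last piece at $x = D_i$ and combining \eqref{eq:pde_boundary}, which gives $u_i(D_i, t) = \kappa_i(\hat P_i(t))$, with the correspondence $p_i(D_i, t) = P_i(t)$ between \eqref{eqs:ps} and \eqref{eqs:P_system} yields the stated boundary.

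The main obstacle is the predictor transport identity for $i \geq 2$: the inductive cascade means the calculation is not a single differentiation but a chain of them, and care is needed to match boundary terms and correctly exploit the cascade continuity $p_i(D_{i-1}, t) = p_{i-1}(D_{i-1}, t)$ alongside the mixed dependence on $\kappa_1(p_i), \ldots, \kappa_{i-1}(p_i)$ and the delayed inputs $u_i, \ldots, u_m$ inside the integrand. One must also verify that the ansatz substitution can be made rigorous, either by differentiating under the integral with a standard uniqueness argument for the resulting first-order identity, or by showing that both $\partial_t p_i$ and $\partial_x p_i$ satisfy the same integral equation with matching boundary data. Once that identity is established, verification of the target ODE, PDE, and boundary condition reduces to direct substitution as outlined above.
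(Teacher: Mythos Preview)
Your proposal is correct. The paper's own proof is essentially a one-line citation: it invokes \cite[Lemma~1]{Bekiaris-Liberis_Krstic:2016} for the exact-predictor case and then observes that the only change introduced by the approximate predictor is at the boundary, where direct substitution of \eqref{eq:pde_boundary} into the last piece of $w_i$ produces \eqref{eq:target_pde_boundary}. Your direct verification---establishing the predictor transport identity $\partial_t p_i = \partial_x p_i$ inductively and then checking each of \eqref{eq:target_ode}--\eqref{eq:target_pde_boundary} by substitution---is exactly the computation that the cited lemma carries out, so you are not taking a different mathematical route but rather unpacking what the paper leaves to the reference. The paper's approach is more economical given that the exact-predictor result is already in the literature; yours is self-contained and makes explicit the key mechanism (the transport property of the $p_i$) that drives the result, which is valuable if the reader does not have \cite{Bekiaris-Liberis_Krstic:2016} at hand.
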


\begin{proof}
    The proof follows exactly that of \cite[Lemma 1]{Bekiaris-Liberis_Krstic:2016} except that the boundary condition on \eqref{eq:pde_boundary} is with the approximate predictor and henceforth the resulting boundary condition of the target system is obtained via direct substitution, yielding \eqref{eq:target_pde_boundary}. 
\end{proof}

Additionally, there exists an invertible set of transformations from the $w$ to the $u$ system characterized by

\begin{lemma}\label{lemma:inverse-backstepping-transforms}
    For all $i \in [m]$, the inverse backstepping transformation for the PDEs $w_i$ defined by 
    \begin{subequations}
    \begin{align}
        u_1(x,t) =& w_1(x,t) + \kappa_1\bigg(\pi_1(x,t)\bigg), \quad x \in [0, D_1) \label{eq:inv_w1} \\
        u_2(x,t) =& w_2(x,t) +
            \begin{cases}
                \kappa_2\bigg(\pi_1(x,t)\bigg), \quad x \in [0, D_1) \\
                \kappa_2\bigg(\pi_2(x,t)\bigg), \quad x \in [D_1, D_2)
            \end{cases} \label{eq:inv_w2} \\
        &\ \vdots \nonumber \\
        u_m(x,t) =& w_m(x,t) -
            \begin{cases}
                \kappa_m\bigg(\pi_1(x,t)\bigg), \quad x \in [0, D_1) \\
                \kappa_m\bigg(\pi_2(x,t)\bigg), \quad x \in [D_1, D_2) \\
                \ \vdots \\
                \kappa_m\bigg(\pi_m(x,t)\bigg),\\ \qquad x \in [D_{m-1}, D_m)
            \end{cases} \label{eq:inv_wm}
    \end{align}
    \end{subequations}
    where
    \begin{subequations}
    \begin{align}
        \pi_1(x,t) =& X(t) + \int_0^x f \bigg (\pi_1(y,t), w_1(y,t) + \kappa_1 \bigg( \pi_1(y,t) \bigg), \nonumber \\ & \ldots, w_m(y,t) + \kappa_m \bigg( \pi_1(y,t) \bigg) \bigg) dy, \label{eq:pi1} \\
            &\qquad x \in [0, D_1] \nonumber \\
        \pi_i(x,t) =& \pi_{i-1}(D_{i-1},t) + \int_{D_{i-1}}^x f \bigg( \pi_i(y,t), \nonumber \\ & w_1(y,t) + \kappa_1 \bigg( \pi_i(y,t) \bigg), \ldots, \nonumber \\ &  w_m(y,t) + \kappa_m \bigg( \pi_m(y,t) \bigg) \bigg) dy, \label{eq:pii} \\
            &\qquad x \in [D_{i-1}, D_i], \quad i=2,\dots,m. \nonumber
    \end{align}
    \end{subequations}
    transforms the plant \eqref{eq:target_ode}, \eqref{eq:target_pde}, \eqref{eq:target_pde_boundary} into the plant \eqref{eq:state}, \eqref{eq:delay_pde}, \eqref{eq:pde_boundary}. 
\end{lemma}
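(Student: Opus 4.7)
The plan is to invert the forward transformation of Lemma \ref{lemma:backstepping-transform-to-target-system} in three stages: (i) establish existence and uniqueness of the implicitly defined $\pi_1, \dots, \pi_m$; (ii) prove the key identity $\pi_i(x,t) = p_i(x,t)$ whenever $u_i$ is given by the inverse transformation, which bridges the two sets of predictors; and (iii) verify that each of the plant equations \eqref{eq:state}, \eqref{eq:delay_pde}, \eqref{eq:pde_boundary} is recovered. For stage (i), I would apply the Picard--Lindel\"of theorem sequentially on the intervals $[0,D_1]$, $[D_1,D_2]$, $\dots$, $[D_{m-1},D_m]$: under Assumption \ref{assumption:lipschitz} and the $C^1$ regularity of each $\kappa_j$ from Assumption \ref{assumption:delay-free-control-laws}, the integrand in \eqref{eq:pi1} is Lipschitz in $\pi_1$, so a contraction argument on a short enough subinterval yields local existence, and stepping extends it to all of $[0,D_1]$; the matching condition $\pi_i(D_{i-1},t) = \pi_{i-1}(D_{i-1},t)$ built into \eqref{eq:pii} then initializes the next branch, with continuous $t$-dependence inherited from $X(t)$ and the $w_j$.

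For stage (ii), I substitute $u_j(y,t) = w_j(y,t) + \kappa_j(\pi_k(y,t))$ (with $k$ selected by the branch of \eqref{eq:inv_w1}-\eqref{eq:inv_wm} containing $y$) into the forward predictor definitions \eqref{eq:p1(x,t)}-\eqref{eq:pm(x,t)}. The resulting integral equations coincide term-by-term with \eqref{eq:pi1}-\eqref{eq:pii}, and uniqueness from stage (i) gives $\pi_i = p_i$ on the respective domain. Stage (iii) then follows directly. Since $\pi_1(0,t) = X(t)$, the inverse transformation at $x=0$ gives $u_i(0,t) = w_i(0,t) + \kappa_i(X(t))$, so applying \eqref{eq:state} reproduces \eqref{eq:target_ode}. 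For the transport PDE, the standard identity $\partial_t p_i = \partial_x p_i$ (established en route to the forward transformation in \cite[Lemma 1]{Bekiaris-Liberis_Krstic:2016}) together with $\partial_t w_i = \partial_x w_i$ from \eqref{eq:target_pde} yields $\partial_t u_i = \partial_x u_i$ on each subinterval after differentiating $u_i = w_i + \kappa_i(\pi_j)$ and using $\pi_j = p_j$. For the boundary, $\pi_i(D_i,t) = p_i(D_i,t) = P_i(t)$, so \eqref{eq:target_pde_boundary} gives $u_i(D_i,t) = \kappa_i(\hat{P}_i(t)) - \kappa_i(P_i(t)) + \kappa_i(P_i(t)) = \kappa_i(\hat{P}_i(t))$, matching \eqref{eq:pde_boundary}.

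The main obstacle I anticipate is the bookkeeping induced by the piecewise structure of \eqref{eq:inv_w2}-\eqref{eq:inv_wm}: at each breakpoint $x = D_j$ the selected branch $\kappa_i(\pi_j)$ versus $\kappa_i(\pi_{j+1})$ switches, so continuity of $u_i(\cdot,t)$ across these points must be verified in order for the PDE derivative identity and the uniqueness argument for $\pi_i$ to carry through globally on $[0,D_i]$. This reduces to the initialization $\pi_{j+1}(D_j,t) = \pi_j(D_j,t)$ baked into \eqref{eq:pii}, which is preserved by composition with the continuous $\kappa_i$. Beyond this, the argument is a mirror image of the forward direction in \cite[Lemma 1]{Bekiaris-Liberis_Krstic:2016}; the only structural deviation is the appearance of the approximate $\hat{P}_i$ in the target boundary \eqref{eq:target_pde_boundary}, which passes through the inversion unchanged since it cancels cleanly with the $\kappa_i(P_i(t))$ produced by $\kappa_i(\pi_i(D_i,t))$.
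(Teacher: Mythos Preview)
Your proposal is correct and follows essentially the same approach as the paper: the paper's proof simply cites \cite[Lemma 2]{Bekiaris-Liberis_Krstic:2016} and observes that the only deviation from that argument is the substitution of the approximate boundary condition \eqref{eq:pde_boundary}, which is precisely the structure you have spelled out in stages (i)--(iii). Your detailed unpacking of the $\pi_i = p_i$ identification and the boundary cancellation $\kappa_i(\hat P_i) - \kappa_i(P_i) + \kappa_i(P_i)$ is exactly what lies behind the one-line citation.
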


\begin{proof}
    As in Lemma \ref{lemma:backstepping-transform-to-target-system}, the proof follows \cite[Lemma 2]{Bekiaris-Liberis_Krstic:2016} with the appropriate substitutions of the boundary condition due to the approximate predictors. 
\end{proof}

We are now ready to provide an ISS stability result for the target ODE-PDE system.

\begin{lemma}\label{lem:lemma_3_of_6}
Let the system \eqref{eqs:target} satisfy the Assumption \ref{assumption:iss}. Then, there exists a class $\mathcal{KL}$ function $\beta_1$ and a class $\mathcal{K}$ function $\gamma_1$ such that the following holds for all $t \geq 0$
\begin{align}
    \breve{\Gamma}(t) \leq& \nonumber  \beta_1 \bigg( \breve{\Gamma}(0), t \bigg) \\ &+ \gamma_1 \bigg( \sup_{0 \leq \tau \leq t} \bigg\{ \sum_{i=1}^m |\kappa_i(\hat{P}_i(t)) - \kappa_i(P_i(t))| \bigg\} \bigg), \label{eq:lemma3_Gamma_bar}
\end{align}
where
\begin{align}
    \breve{\Gamma}(t) :=& |X(t)| + \sum_{i=1}^m \|w_i(t)\|_{L^\infty}. \label{eq:Gamma_bar_def}
\end{align}
\end{lemma}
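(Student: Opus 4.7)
The plan is to exploit the cascade structure of the target system \eqref{eqs:target}: the $X$-equation is ISS with respect to $(w_1(0,\cdot),\dots,w_m(0,\cdot))$ by Assumption \ref{assumption:iss}, while each $w_i$-equation is a pure transport PDE whose only forcing is the boundary value $w_i(D_i,t)=\kappa_i(\hat{P}_i(t))-\kappa_i(P_i(t))$. I would analyze the PDE block first, feed its $L^\infty$-bound into the ODE block via ISS, and then add the two estimates and consolidate the class-$\mathcal{KL}$ and class-$\mathcal{K}$ pieces.

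\textbf{Step 1 (transport PDEs).} Using the method of characteristics on \eqref{eq:target_pde}–\eqref{eq:target_pde_boundary}, the solution for each $i\in[m]$ splits as
\begin{equation*}
w_i(x,t)=\begin{cases} w_i(x+t,0), & x+t<D_i,\\ \kappa_i(\hat{P}_i(t-(D_i-x)))-\kappa_i(P_i(t-(D_i-x))), & x+t\geq D_i.\end{cases}
\end{equation*}
Taking the spatial supremum yields $\|w_i(t)\|_{L^\infty}\leq \tilde\beta_i(\|w_i(0)\|_{L^\infty},t)+\sup_{0\leq\tau\leq t}|\kappa_i(\hat P_i(\tau))-\kappa_i(P_i(\tau))|$, where $\tilde\beta_i\in\mathcal{KL}$ is any function that dominates $\|w_i(0)\|_{L^\infty}$ for $t<D_i$ and vanishes for $t\geq D_i$ (since the initial data is transported out of the domain in at most $D_i$ units of time).

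\textbf{Step 2 (ISS for the ODE).} Applying Assumption \ref{assumption:iss} to \eqref{eq:target_ode} with $\omega_i(t)=w_i(0,t)$ provides $\beta_X\in\mathcal{KL}$ and $\gamma_X\in\mathcal{K}$ with
\begin{equation*}
|X(t)|\leq \beta_X(|X(0)|,t)+\gamma_X\!\left(\sup_{0\leq\tau\leq t}\sum_{i=1}^m|w_i(0,\tau)|\right).
\end{equation*}
Since $|w_i(0,\tau)|\leq\|w_i(\tau)\|_{L^\infty}$, I can substitute the Step 1 bound into this inequality, and use the standard $\mathcal{K}$-inequality $\gamma_X(a+b)\leq \gamma_X(2a)+\gamma_X(2b)$ to separate the initial-data contribution from the boundary-error contribution.

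\textbf{Step 3 (assembly).} Adding the ODE estimate from Step 2 to the sum of the PDE estimates from Step 1 produces an inequality of the form $\breve{\Gamma}(t)\leq B(\breve{\Gamma}(0),t)+G\!\left(\sup_{0\leq\tau\leq t}\sum_{i=1}^m|\kappa_i(\hat P_i(\tau))-\kappa_i(P_i(\tau))|\right)$. Choosing $\beta_1(r,t):=B(r,t)$ and $\gamma_1:=G$, both in the respective comparison classes by construction, yields exactly \eqref{eq:lemma3_Gamma_bar}.

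The main obstacle is bookkeeping rather than depth: I need to absorb the $m$ different PDE $\mathcal{KL}$ bounds (each forgetting its initial datum on a different horizon $D_i$) together with the ODE $\mathcal{KL}$ bound into a single class-$\mathcal{KL}$ function of $\breve{\Gamma}(0)$, while keeping the boundary-error dependence cleanly inside one class-$\mathcal{K}$ function of the supremum. The nontrivial point is ensuring causality so that the supremum is taken over $[0,t]$ consistently across the PDE and ODE estimates; this is where the finite-memory property of the transport PDE (initial data gone after $D_i$) is essential for the $\beta_1\in\mathcal{KL}$ conclusion.
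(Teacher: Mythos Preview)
Your plan matches the paper's: ISS for the $X$-subsystem from Assumption~\ref{assumption:iss}, an $L^\infty$ bound for each transport PDE (the paper uses the exponentially weighted estimate $\|w_i(t)\|_{L^\infty}\le e^{c(D_i-t)}\|w_i(0)\|_{L^\infty}+e^{cD_i}\sup_{0\le\tau\le t}|w_i(D_i,\tau)|$ from \cite[Lemma~6]{Luke_Peijia:2025} in place of your raw characteristics split, but the two are equivalent here), and then a cascade combination.

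The only substantive difference is in how the combination is closed, and there your Step~3 is too quick as stated. After the split $\gamma_X(a+b)\le\gamma_X(2a)+\gamma_X(2b)$, the initial-data contribution becomes $\gamma_X\!\big(2\sup_{0\le\tau\le t}\sum_i\tilde\beta_i(\|w_i(0)\|_{L^\infty},\tau)\big)=\gamma_X\!\big(2\sum_i\tilde\beta_i(\|w_i(0)\|_{L^\infty},0)\big)$, which is constant in $t$ and hence not $\mathcal{KL}$. You diagnose this correctly in your last paragraph---the finite-memory property of the transport equations is exactly what is needed---but the time-splitting/restart argument that exploits it is never carried out. The paper avoids this bookkeeping entirely: it views $|X(t)|$ and $\sum_i\|w_i(t)\|_{L^\infty}$ as a cascade of two ISS subsystems with external input $\sum_i|w_i(D_i,\cdot)|$ (equations \eqref{eq:lemma3_x_iss}--\eqref{eq:lemma3_w_iss}) and invokes the standard cascaded-ISS lemma \cite[Lemma~C.4]{KKK} (see also \cite{402246}) to obtain \eqref{eq:x_and_w_after_c4} directly. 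Citing that lemma is the cleanest way to complete your argument.
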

\begin{proof}
From Assumption \ref{assumption:iss} and the definition of input-to-state stability in \cite{SontagISS:1995}, it follows that there exist function $\beta_2 \in \mathcal{KL}$ and $\gamma_2 \in \mathcal{K}$ such that
\begin{align}
    |X(t)| \leq& \beta_2(|X(0)|, t) + \gamma_2 \bigg( \sup_{0 \leq \tau \leq t} \bigg\{ \sum_{i=1}^m |w_i(0,\tau)| \bigg\} \bigg) \label{eq:w_iss}\,, 
\end{align}
for all $t \geq 0$. Using the ISS estimate of the transport PDE from \cite[Lemma 6]{Luke_Peijia:2025}, for any constant $c> 0$, for all $i \in [m]$, and for all times $t > 0$, we have that 
\begin{align}
    \|w_i(t)\|_{L^\infty} \leq& \exp(c(D_i-t)) \|w_i(0)\|_{L^\infty} \nonumber  \\ &+ \exp(cD_i) \sup_{0 \leq \tau \leq t} |w_i(D_i,\tau)|. \label{eq:wi_pde_iss}
\end{align}
From \eqref{eq:w_iss}, we have
\begin{align}
    |X(t)| \leq& \beta_2(|X(0)|, t) + \gamma_2 \bigg( \sup_{0 \leq \tau \leq t} \bigg\{ \sum_{i=1}^m |w_i(0,\tau)| \bigg\} \bigg) \nonumber \\
        \leq& \beta_2(|X(0)|, t) + \gamma_2 \bigg( \sup_{0 \leq \tau \leq t} \bigg\{ \sum_{i=1}^m \|w_i(\tau)\|_{L^\infty} \nonumber \\ & + \sum_{i=1}^m |w_i(D_i,\tau)| \bigg\} \bigg) \label{eq:lemma3_x_iss},
\end{align}
and, from \eqref{eq:wi_pde_iss} applied to all $i \in [m]$, we have
\begin{align}
    \sum_{i=1}^m \|w_i(t)\|_{L^\infty}  
        \leq& \beta_3 \bigg( \sum_{i=1}^m \|w_i(0)\|_{L^\infty}, t \bigg) \nonumber \\ & + \gamma_3 \bigg( \sup_{0 \leq \tau \leq t} \bigg\{ \sum_{i=1}^m |w_i(D_i,\tau)| \bigg\} \bigg), \label{eq:lemma3_w_iss}
\end{align}
where $\beta_3 \in \mathcal{KL}$, and $\gamma_3 \in \mathcal{K}$. Now, treating $X(t)$ and $\sum_{i=1}^m \|w_i(t)\|_{L^\infty}$ as two interconnected subsystems with $\sup_{0 \leq \tau \leq t} \sum_{i=1}^m|w_i(D_i,\tau)|$ as their input, we can apply the classical results of cascaded ISS systems (See \cite[Lemme C.4]{KKK} or \cite{402246}) using the forms \eqref{eq:lemma3_x_iss}, \eqref{eq:lemma3_w_iss} to obtain 
\begin{align}
    |X(t)|\nonumber  + \sum_{i=1}^m \|w_i(0)\|_{L^\infty} \leq& \beta_1 \bigg( |X(0)| + \sum_{i=1}^m \|w_i(0)\|_{L^\infty}, t \bigg) \nonumber \\&\!\!\!+ \gamma_1 \bigg( \sup_{0 \leq \tau \leq t} \bigg\{ \sum_{i=1}^m |w_i(D_i,\tau)| \bigg\} \bigg), \label{eq:x_and_w_after_c4}
\end{align}
where $\beta_1 \in \mathcal{KL}$, and $\gamma_1 \in \mathcal{K}$. Lastly, by defining the function $\breve{\Gamma}(t) = |X(t)| + \sum_{i=1}^m \|w_i(t)\|_{L^\infty}$ and using \eqref{eq:target_pde_boundary} completes the proof.
\end{proof}

To complete the proof of Theorem \ref{thm:main_result}, the exact same approach to convert from the bound on the target system to the original system using \cite[Lemmas 4, 5, 6]{Bekiaris-Liberis_Krstic:2016} can be applied to \eqref{eq:lemma3_Gamma_bar} using properties of class $\mathcal{K}$ functions. Thus, there exists $\beta_5 \in \mathcal{KL}$, $\gamma_5 \in \mathcal{KL}$ such that
\begin{align}
    \Gamma(t) \leq& \beta_5 \bigg( \Gamma(0), t \bigg) \nonumber  \\ & + \gamma_5 \bigg( \sup_{0 \leq \tau \leq t} \bigg\{ \sum_{i=1}^m |\kappa_i(\hat{P}_i(t)) - \kappa_i(P_i(t))| \bigg\} \bigg), \label{eq:original-system-bound}
\end{align}
Using the Lipschitz constant $C_\kappa$ as in \eqref{eq:c_kappa} with the universal approximation theorem (Theorem \ref{thm:neural-op-approximate-predictor-theorem}), and property of class $\mathcal{K}$ functions, one achieves Theorem \ref{thm:main_result}. 
\end{proof}

\section{Numerical simulations} \label{sec:numerical}
To validate our theoretical results, we develop neural operator approximate predictors for the unicycle system with distinct steering and drive delays. Consider the following: 
\begin{subequations}
\begin{align}
    x_(t) &= U_2(t-D_2)\cos(\theta(t)) \label{eq:unicycle_x1_dot} \\
    y(t) &= U_2(t-D_2)\sin(\theta(t)) \label{eq:unicycle_x2_dot} \\
    \theta(t) &= U_1(t-D_1), \label{eq:unicycle_x3_dot}
\end{align}
\end{subequations}
where $x, y$ are the 2D coordinate positions of the unicycle, $\theta$ is spatial orientation, $U_1$ is the turning rate, $U_2$ is linear speed, and $D_2$ is the longer breaking delay and $D_1$ is the short steering delay. We consider the time-varying controller of \cite{POMET1992147} given as presented in \cite[Section V]{Bekiaris-Liberis_Krstic:2016}.

In our numerical experiment, we consider delays of $D_1 = 0.25$s and $D_2= 0.6$s with a time step $dt=0.001$s from $0$ to $T=10$s of simulation time. 
As both a baseline and to create a dataset for training the neural operator, we simulate $100$ trajectories using the numerical approach of \cite{Predictor_Feedback_For_Delay_Systems}, applying additive uniform noise $\mathcal{U}(-0.2, 0.2)$ to the state at each timestep. From each trajectory, we collect all predictor input/output pairs, resulting in a total dataset of 200,000 pairs. All trajectories start from the same initial conditions $(x, y, \theta) = \bm{0}$, with zero initial input $U_j(s) = 0$, $s \in [-D_j, 0]$, $j \in [2]$. The dataset generation process takes about an hour on a standard laptop. For training, we consider two popular neural operator architectures—FNO and DeepONet—which are trained in a supervised manner(10 minutes, Nvidia 4060 GPU). \textcolor{blue}{See \cite{ProjectGitHub} for all code, datasets, models, and hyperparameters.}

\begin{figure*}[t]
    \centering
    \includegraphics[trim=0 5px 0 7px, clip]{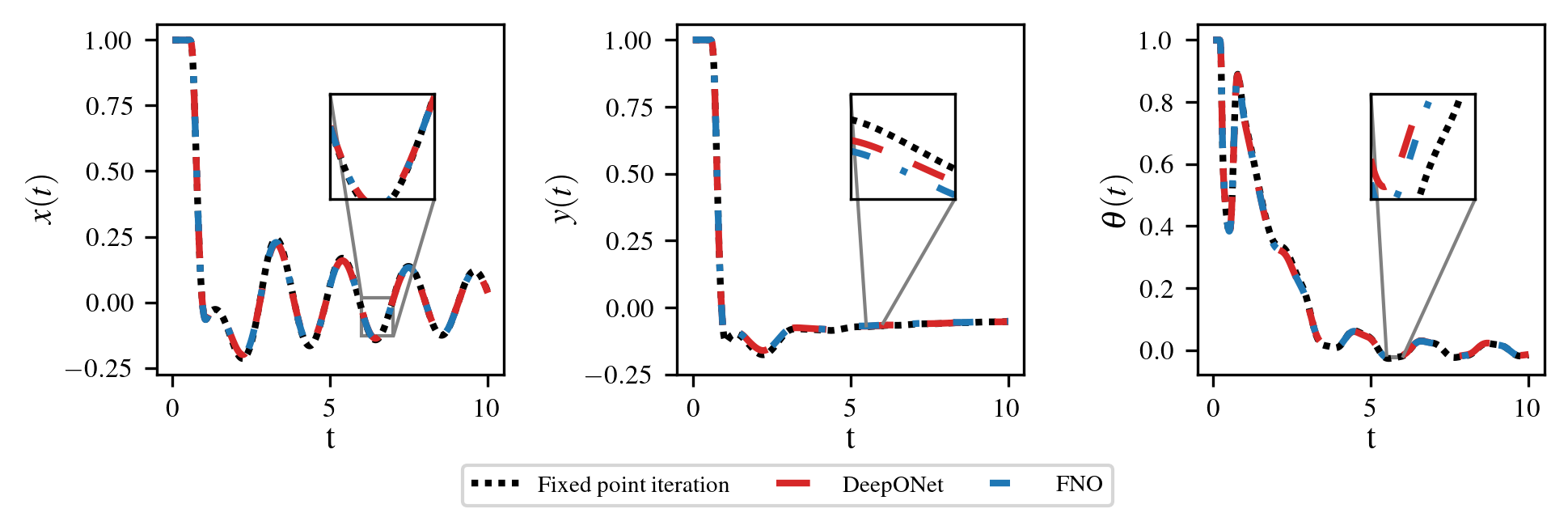}
    \caption{Example evaluation trajectories with fixed point, DeepONet, and FNO approximated predictors.}
    \label{fig:single_trajectory}
    \vspace{-1em}
\end{figure*}

In Table \ref{tab:performance_errors}, we present the performance of the neural approximated predictors.
All approaches achieve strong performance with errors on the magnitude of $\sim\!10^{-6}$. 
Furthermore, to accurately evaluate the neural approximated predictor operator in the feedback loop, we simulate 25 new trajectories for each approximation approach and present the average trajectory residuals with respect to the stationary equilibrium point, $(x,y,\theta)=\bm{0}$ in Table \ref{tab:performance_errors}. Furthermore, to showcase the similarity between both approaches, we provide an example of one of the $25$ trajectories in Figure \ref{fig:single_trajectory}.

Table \ref{tab:performance_times} highlights the computational advantages of neural operator approximation of predictors over fixed point iteration, particularly at finer spatial resolutions. While fixed point iteration’s computation time increases significantly as resolution improves, FNO maintains nearly constant runtimes, and DeepONet shows moderate variation. These traits enable both neural operators to outperform traditional methods as resolution becomes finer. Notably, their speedups are largely due to deployment on specialized GPU hardware compared to CPU-based fixed point iteration.

\begin{table}[H]
\centering
\caption{
  The performance of predictor approximation approaches across training and validation datasets. The average trajectory tracking error is the summed $L^2$ error over time averaged over the 25 different test scenarios (with randomized initial conditions).}
\label{tab:performance_errors}
\resizebox{0.5\textwidth}{!}{
\begin{tabular}{@{}*{5}{c}@{}}
    \toprule
    \multirow{2}{*}{\makecell[c]{Approximation\\methods}} & 
    \multirow{2}{*}{\makecell[c]{Parameters}} & 
    \multirow{2}{*}{\makecell[c]{Training $L_2$\\error $(\times10^{-6})\downarrow$}} & 
    \multirow{2}{*}{\makecell[c]{Validation $L_2$\\error $(\times10^{-6})\downarrow$}} & 
    \multirow{2}{*}{\makecell[c]{Avg. trajectory\\$L_2$ error $\downarrow$}} \\
    & & & & \\
    \midrule
    \makecell[c]{Fixed point\\approximation} & \makecell[c]{---} & \makecell[c]{---} & \makecell[c]{---} & 5.131 \\
    FNO & 42211 & 0.33 & 0.25 & 5.144 \\
    DeepONet & 1124824 & 3.00 & 2.46 & 4.874 \\
    \bottomrule
\end{tabular}
}
\end{table}

\vspace{-1em}

\begin{table}[H]
\centering
\caption{
  Computation time (ms) averaged over $1,000$ predictions. 
}
\label{tab:performance_times}
\resizebox{0.5\textwidth}{!}{
\begin{tabular}{@{}*{6}{c}@{}}
    \toprule
    \multirow{2}{*}{\makecell[c]{Approximation\\methods}} & 
    \multicolumn{4}{c}{\makecell[c]{Computation time (ms)}} & 
    \multirow{2}{*}{\makecell[c]{Speedup $\uparrow$}} \\
    \cmidrule(lr){2-5}
    & \makecell[c]{dx = 0.01} & \makecell[c]{dx = 0.005} & \makecell[c]{dx = 0.001} & \makecell[c]{dx = 0.0005} & \\
    \midrule
    \makecell[c]{Fixed point\\approximation} & 8.762 & 17.735 & 97.676 & 218.740 & \makecell[c]{---} \\
    FNO & 30.850 & 29.664 & 34.653 & 31.093 & 7.04$\times$ \\
    DeepONet & 7.245 & 9.704 & 10.816 & 23.126 & 9.46$\times$ \\
    \bottomrule
\end{tabular}
}
\end{table}

\section{Conclusion and Future Work}
In conclusion, we extended the neural operator predictor framework to multi-input nonlinear systems with distinct delays, proving existence and semiglobal practical stability (Theorems \ref{thm:neural-op-approximate-predictor-theorem} and \ref{thm:main_result}). Our results, validated by unicycle simulations and consistent with single-delay cases \cite{Luke_Peijia:2025}, highlight the balance between efficiency and accuracy and applies to any approximate predictor satisfying uniform error bounds. This opens the door for more complex time-varying approximate predictor methodologies such as those with noise and output feedback designs. 
\bibliography{references}
\bibliographystyle{abbrv}

\end{document}